\newtheorem{Lemma}{Lemma}
\newtheorem{lemma}[Lemma]{$\mathbf{Lemma}$}
\newcounter{problem}
\newcounter{save@equation}
\newcounter{save@problem}
\newenvironment{problem}
{\setcounter{problem}{\value{save@problem}}%
  \setcounter{save@equation}{\value{equation}}%
  \let\c@equation\c@problem
  \subequations
}
{\endsubequations
  \setcounter{save@problem}{\value{equation}}%
  \setcounter{equation}{\value{save@equation}}%
}
\begin{document}
\title{ \huge{Utilizing Imperfect Resolution of Near-Field Beamforming:   \\A Hybrid-NOMA Perspective  }}

\author{\author{ Zhiguo Ding, \IEEEmembership{Fellow, IEEE}  and H. Vincent Poor, \IEEEmembership{Life Fellow, IEEE}   \thanks{ 
  
\vspace{-2em}

  Z. Ding is with  Khalifa University, Abu Dhabi, UAE, and  University of Manchester, Manchester, M1 9BB, UK.    H. V. Poor is  with   Princeton University, Princeton, NJ 08544,
USA.   }
 
 \vspace{-2em}

  }\vspace{-5em}}
 \maketitle

\vspace{-1em}
\begin{abstract}
This letter studies how  the imperfect resolution of near-field beamforming, the key feature of near-field communications, can be used to improve  the throughput and connectivity of wireless networks. In particular, a hybrid non-orthogonal multiple access (NOMA) transmission strategy is developed to use  preconfigured near-field beams for serving additional users.  An energy consumption minimization problem is first formulated and then solved by using different successive interference cancellation strategies. Both analytical and simulation results are presented to illustrate the impact of the resolution  of near-field beamforming  on the design of hybrid NOMA transmission. 
\end{abstract}\vspace{-0.5em}

\begin{IEEEkeywords}
Near-field communication,   non-orthogonal multiple access (NOMA), resolution of near-field beamforming. 
\end{IEEEkeywords}
\vspace{-1em} 
  
  \section{Introduction}
 Near-field communication has recently received considerable  attention because of its innovative way of utilizing spatial degrees of freedom \cite{devar,10129111, 9738442,Eldar2}. Different from conventional far-field beamforming which relies on the approximated planar channel model, near-field beamforming is built on a more accurate spherical model. The recent studies in \cite{10195974, zhaolin} showed that the use of the spherical model enables beamfocusing, i.e., users with different locations can be distinguished  in the so-called distance-angle domain, whereas   conventional far-field beamforming   relies on the angle domain only. When the users' distances to the base station are proportional to the Rayleigh distance,  the resolution of near-field beamforming, i.e.,  whether users can be distinguished  in the distance-angle domain, has been shown to be poor  \cite{dingreso}. However, if the users are extremely close to the base station,  indeed the resolution of near-field beamforming is almost perfect, as shown in Fig. \ref{fig1a} \cite{ldmadai}. When there are multiple users that  are very close to the base station, Fig. \ref{fig1b} shows that near-field beamforming can still effectively suppress the interference among these users. However, Fig. \ref{fig1b} also reveals an important observation that the magnitude of these users' near-field beams   can be accumulated at a far location, which presents an opportunity  to use these preconfigured near-field beams for supporting additional users.
 
 Motivated by this observation, this letter focuses on the scenario with multiple near-field legacy users, with   the aim of studying  how the   near-field beams preconfigured to the legacy users can be used to serve additional users. The additional users can be served at the same time as the legacy users, as in those existing  strategies based on pure non-orthogonal multiple access (NOMA) in \cite{10129111,dingreso}. Different from these existing works,    a hybrid NOMA strategy is considered in this letter, where the existing pure NOMA strategies and conventional orthogonal multiple access (OMA) belong to this general framework. We note that hybrid NOMA has been applied to the uplink scenario only  \cite{9679390}, whereas the downlink scenario is   the focus of   this letter. An energy consumption minimization problem is first formulated, and then solved based on different successive interference cancellation (SIC) strategies. Both analytical and numerical results are obtained for insightful understandings about   hybrid NOMA assisted  near-field communications. In particular, hybrid NOMA is shown to be   preferable to pure NOMA in terms of energy consumption. In addition, when the resolution of near-field beamforming is perfect, it is preferable to use all the preconfigured beams to support additional users. However, if the resolution of near-field beamforming is poor, it is crucial to carry out beam selection for energy minimization.

   \begin{figure}[t] \vspace{-0em}
\begin{center}
\subfigure[A Single-User Case]{\label{fig1a}\includegraphics[width=0.24\textwidth]{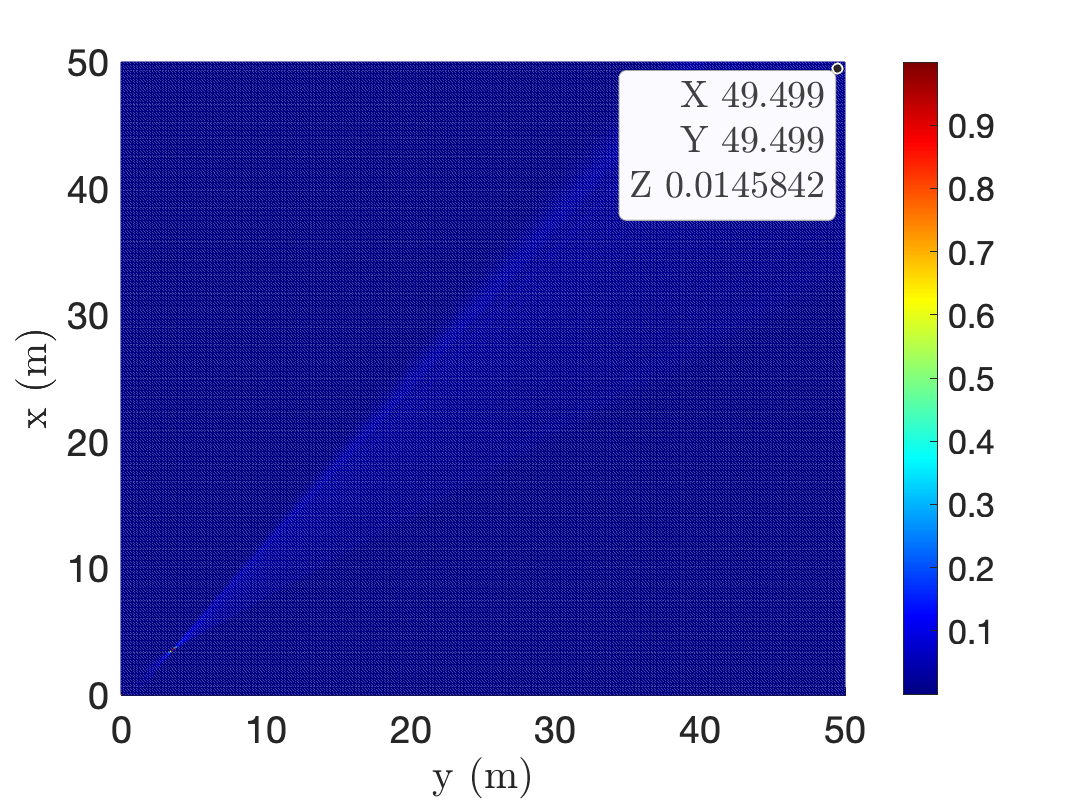}} 
\subfigure[A Three-User Case]{\label{fig1b}\includegraphics[width=0.24\textwidth]{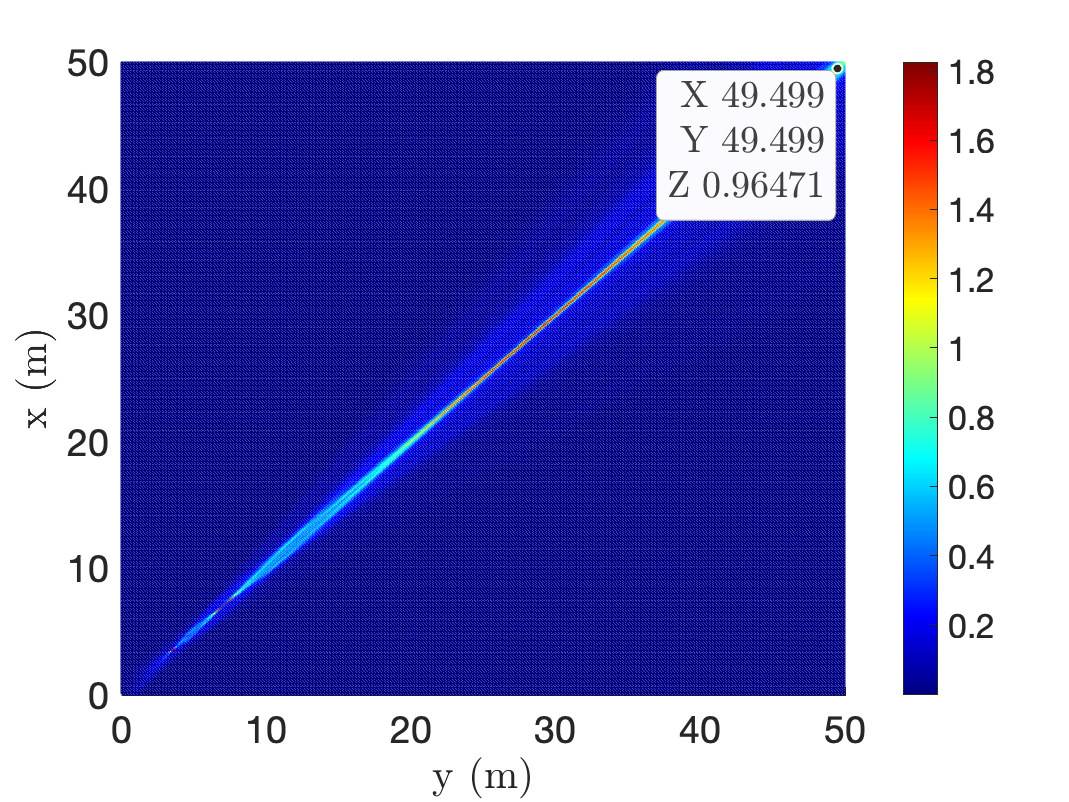}} \vspace{-2em}
\end{center}
\caption{ Illustration of the  beamfocusing resolution ($\Delta$), with a $513$-antenna uniform linear array, where the carrier frequency is $f_c=28$ GHz,   the antenna spacing is $d=\frac{\lambda}{2}$ and $\lambda $ denotes the wavelength. For Fig. \ref{fig1a}, there is a single user located at  $\left(5, \frac{\pi}{4}\right)$, and for Fig. \ref{fig1b}, the users' locations are $\left(5, \frac{\pi}{4}\right)$, $\left(10, \frac{\pi}{4}\right)$, and $\left(40, \frac{\pi}{4}\right)$, respectively, where   polar coordinates are used.       \vspace{-1em} }\label{fig1}\vspace{-1em}
\end{figure}


\vspace{-1em}
\section{System Model}
Consider a downlink legacy network, where there exist  $M$ near-field   users, which are denoted by ${\rm U}_{m}$, $1\leq m \leq M$, and served by a  base station   equipped with  an $N$-antenna uniform linear array (ULA). The aim of the letter is to use the beams preconfigured to the legacy users to serve an additional user, denoted by ${\rm U}_0$.  Each user is assumed to have a single antenna.  Consider that the ULA is placed at the center of a  $2$-dimensional  plane. Denote the locations of  ${\rm U}_m$,   the center of the ULA, and   the $n$-th element of the ULA by  ${\boldsymbol \psi}_m^{\rm U}$,  ${\boldsymbol \psi}_0$, and ${\boldsymbol \psi}_n$, respectively, where Cartesian coordinates are used.  It is assumed that the legacy users are very close to the base station in order to facilitate the implementation of beamfocusing, whereas ${\rm U}_0$ is located far away from the base station.

\vspace{-1em}
\subsection{OMA Transmission}
 To avoid the interference between the two types of users, a straightforward OMA approach  can be adopted by combining space division multiple access (SDMA) with time-division multiple access (TDMA).  
In particular,  during the first $M$ time slots, the base station broadcasts $ \sum^{M}_{m=1}\mathbf{w}_ms_m$, and the legacy user ${\rm U}_m$ receives the following observation:
 \begin{align}
 y_m = \mathbf{h}_m^H \sum^{M}_{m=1}\sqrt{P}\mathbf{w}_ms_m+n_m,
 \end{align} 
where  $s_m$, $\mathbf{w}_m$,  $P$ denote ${\rm U}_m$'s signal, beamforming vector and transmit power, respectively,   $n_m$ denotes the white noise with its power denoted by $\sigma^2$, ${\rm U}_m$'s spherical channel vector is denoted by  $
 \mathbf{h}_m = \sqrt{N}\alpha_m \mathbf{b}\left({\boldsymbol \psi}^{\rm U}_m\right)$, 
  $\mathbf{b}\left({\boldsymbol \psi} \right)=\frac{1}{\sqrt{N}}\begin{bmatrix} 
 e^{-j\frac{2\pi }{\lambda}\left| {\boldsymbol \psi}  -{\boldsymbol \psi}_1\right|} &\cdots &  e^{-j\frac{2\pi }{\lambda}\left| {\boldsymbol \psi}  -{\boldsymbol \psi}_N\right|}
 \end{bmatrix}^T$, $\alpha_m = \frac{\lambda}{4\pi   \left| {\boldsymbol \psi}^{\rm U}_m -{\boldsymbol \psi}_0\right|}$, and $\lambda$ denotes the wavelength \cite{devar,9738442,Eldar2}.   
 Therefore,   ${\rm U}_m$'s data rate in OMA can be expressed as: $ R_m^{\rm OMA} = \log\left(1+\frac{P  |\mathbf{h}_m^H\mathbf{w}_m|^2}{P \sum^{M}_{i=1,i\neq m} |\mathbf{h}_m^H\mathbf{w}_i|^2+\sigma^2}\right)$.

During the $(M+1)$-th time slot, only ${\rm U}_0$ is served   with the following data rate:
 \begin{align}
 R_0^{\rm OMA}\left(P_0^{\rm OMA}\right) = \log\left(1+\frac{P_0^{\rm OMA}}{\sigma^2} |\mathbf{h}_0^H\mathbf{w}_0|^2\right),
 \end{align}
 where $\mathbf{h}_0$, $P_0^{\rm OMA}$ and $\mathbf{w}_0$ denote  ${\rm U}_0$'s channel vector, transmit power, and beamforming vector, respectively.

A conventional design of $\mathbf{w}_m$ is to apply the zero-forcing (ZF) approach, which means that $\mathbf{W} = \mathbf{H}\left(\mathbf{H}^H\mathbf{H}\right)^{-1}\mathbf{D}$, where $\mathbf{W}=\begin{bmatrix}\mathbf{w}_1&\cdots &\mathbf{w}_M
\end{bmatrix}$,   $\mathbf{H}=\begin{bmatrix}\mathbf{h}_1&\cdots &\mathbf{h}_M
\end{bmatrix}$  and $\mathbf{D}=\text{diag}\left(\text{diag}\left(\left(\mathbf{H}^H\mathbf{H}\right)^{-1}\right)\right)^{-\frac{1}{2}}$.
If the resolution of near-field beamforming is perfect, i.e., $\Delta\triangleq \left|\mathbf{b}\left({\boldsymbol \psi}^{\rm U}_m \right)^H\mathbf{b}\left({\boldsymbol \psi}^{\rm U}_n \right)\right|^2\rightarrow 0$ for $m\neq n$,  the beamfocusing based beamforming can also be used, i.e., $\mathbf{w}_m=\mathbf{b}(\boldsymbol \phi^{\rm U}_{m})$.  For  ${\rm U}_0$, $\mathbf{w}_0$ can be simply be  $\mathbf{w}_0=\mathbf{b}(\boldsymbol \phi^{\rm U}_{0})$.  We note that ${\rm U}_0$ might be a far-field user, where $\mathbf{w}_0=\mathbf{b}(\boldsymbol \phi^{\rm U}_{0})$ is still applicable  since the far-field planar channel model is an approximation of the  more accurate spherical model.

  \vspace{-1em}  
 \subsection{Hybrid NOMA Transmission }
The key idea of hybrid NOMA is to   use those   beams preconfigured to the legacy users, $\mathbf{w}_m$, as bandwidth resources to serve ${\rm U}_0$.  In particular, during the first $M$ time slots, the base station sends the following signal 
  \begin{align}
 x^{\rm NOMA}=\sum^{M}_{m=1}\mathbf{w}_m(\sqrt{P}s_m+f_m\sqrt{P_m} s_0),
 \end{align}
 where $P_m$ is ${\rm U}_0$'s transmit power allocated on $\mathbf{w}_m$, and  $f_m$ is the corresponding complex-valued coefficient whose magnitude is one.  Because   ${\rm U}_0$ suffers large path losses    and hence has poor channel conditions, ${\rm U}_0$ is to directly decode its own signal. 
  However, the legacy users,   ${\rm U}_m$, can have two decoding approaches, as illustrated in the following.  
 
 \subsubsection{Approach I}
Each legacy user decodes its own signal directly, which means that its data rate is given by
 \begin{align}
R_{1,m}^{\rm I}=  \log\left(1+\frac{P|\mathbf{h}_m^H\mathbf{w}_m|^2
}{I_m+ \sigma^2}\right) ,
 \end{align}
 where $I_m=P \sum^{M}_{i=1,i\neq m} |\mathbf{h}_m^H\mathbf{w}_i|^2+\left|\mathbf{h}_m^H\sum^{M}_{k=1}\mathbf{w}_k f_k\sqrt{P_k}\right|^2$.  
 To avoid any performance degradation at the legacy users, ${\rm U}_0$'s transmit power needs to be capped; such that $R_{1,m}^{\rm I}\geq R_t$, $1\leq m 
\leq M$, where $R_t$ denotes the legacy user's target data rate.  With this approach,  ${\rm U}_0$'s data rate during the first $M$ time slots is given by
 \begin{align}
R_{1,0}^{\rm I}=&  \log\left(1+\frac{\left|\mathbf{h}_0^H\sum^{M}_{k=1}\mathbf{w}_k f_k\sqrt{P_k}\right|^2}{P \sum^{M}_{i=1} |\mathbf{h}_0^H\mathbf{w}_i|^2+ \sigma^2}\right) .
 \end{align}
 
 \subsubsection{Approach II}
Due to their strong channel conditions,  the legacy users can first decode ${\rm U}_0$'s signal, before decoding their own signals. In particular,   the data rate for ${\rm U}_m$ to decode ${\rm U}_0$'s signal during the first $M$  time slots is capped  as follows:
 \begin{align}\label{r1m}
R_{1,m}^{\rm II}=  \log\left(1+\frac{\left|\mathbf{h}_m^H\sum^{M}_{k=1}\mathbf{w}_k f_k\sqrt{P_k}\right|^2}{P \sum^{M}_{i=1} |\mathbf{h}_m^H\mathbf{w}_i|^2+ \sigma^2}\right) ,
 \end{align}
for $1\leq m \leq M$. After ${\rm U}_0$'s signal is decoded and subtracted, ${\rm U}_m$ can decode its own signal with the same data rate as the one in OMA, i.e., $R_m^{\rm OMA}$.  Because of the constraints shown in \eqref{r1m}, ${\rm U}_0$'s data rate during the first $M$ time slots is given by
\begin{align}
R_{1,0}^{\rm II}=&\min\left\{
R_{1,0}^{\rm I}, R_{1,m}^{\rm II}, 1\leq m \leq M
\right\}. 
\end{align}

\section{Resource Allocation for Hybrid-NOMA}
With hybrid-NOMA, ${\rm U}_0$ has   access to the first $M$ time slots, which is useful for ${\rm U}_0$ to increase its data rate, but potentially increases its energy consumption. Therefore,  this letter focuses on the following energy minimization problem:
  \begin{problem}\label{pb:1} 
  \begin{alignat}{2}
\underset{P_m\geq0,f_m  }{\rm{min}} &\quad    MT \sum^{M}_{m=1} P_m  +TP_0\label{1tst:1}
\\ s.t. &\quad MT R_{1,0}^{k} +T R_0^{\rm OMA}\left(P_0 \right)   \geq T R_0,  \label{1tst:2} 
\\  &\quad \left(R_{1,m}^{\rm I}     - R_t \right)\delta \geq 0, 1\leq m \leq M \label{1tst:3} 
  \end{alignat}
\end{problem}  
for $k\in\{{\rm I, II}\}$, where $\delta=1$ for Approach I, $\delta=0$ for Approach II,  $T$ denotes the duration of each time slot, $P_0$ denotes ${\rm U}_0$'s transmit power in the $(M+1)$-th time slot, and $R_{0}$ denotes ${\rm U}_0$'s target data rate. Depending on which of the two approaches is adopted, problem \ref{pb:1} can be solved differently, as shown in the following.

 {\it Remark 1:} By setting $P_m=0$, $1\leq m \leq M$, hybrid NOMA becomes pure OMA. By setting $P_0=0$, hybrid NOMA becomes pure NOMA. Therefore,   hybrid NOMA is a general framework with pure NOMA and OMA as its special cases.  
 \vspace{-1em}
\subsection{Approach I}
Note that  for beamfocusing,  $\mathbf{w}_m^H\mathbf{w}_n\approx 0$, for $m\neq n$, if the resolution of near-field beamforming is perfect, and $\mathbf{w}_m^H\mathbf{w}_n= 0$ if ZF beamforming is used.  Therefore, $R_{1,m}^{\rm I}$ can be simplified as follows: 
  \begin{align}
R_{1,m}^{\rm I}\approx &  \log\left(1+\frac{P|\mathbf{h}_m^H\mathbf{w}_m|^2
}{ \left|\mathbf{h}_m^H \mathbf{w}_m f_m\sqrt{P_m}\right|^2+ \sigma^2}\right) \\\nonumber
=&
  \log\left(1+\frac{Ph_m
}{ h_m P_m+ \sigma^2}\right) ,
 \end{align}
 where $h_m=\left|\mathbf{h}_m^H \mathbf{w}_m \right|^2$. Because $R_{1,m}^{\rm I}$ is not a function of $f_m$, $f_m$ can be set as follows: $f_m=\frac{\mathbf{h}_0^H \mathbf{w}_m}{|\mathbf{h}_0^H \mathbf{w}_m|}$, which means that 
 ${\rm U}_0$'s data rate can be expressed as follows:
  \begin{align}
R_{1,0}^{\rm I}=&    \log\left(1+a_0\left|\sum^{M}_{k=1} g_k\sqrt{P_k}\right|^2 \right) ,
 \end{align}
 where $g_m=|\mathbf{h}_0^H\mathbf{w}_m|$, $0\leq m \leq M$, and $a_m=\frac{1}{P \sum^{M}_{i=1} |\mathbf{h}_m^H\mathbf{w}_i|^2+ \sigma^2}$, $0\leq m \leq M$. 
 
 By using the simplified expression for the data rates, problem \ref{pb:1} can be simplified as follows:  
  \begin{problem}\label{pb:2} 
  \begin{alignat}{2}
\underset{P_m\geq0  }{\rm{min}} &\text{ }   M \sum^{M}_{m=1} P_m  +P_0   \label{2tst:1}
\\ s.t. &\text{ }   M\log\left(1+a_0\left|\sum^{M}_{k=1} g_k\sqrt{P_k}\right|^2 \right) +   \log\left(1+b_0P_0  \right) \geq R_0  \label{2tst:2} 
\\  &\text{ }   P_m\leq \frac{P}{e^{R_t}-1} -\frac{\sigma^2}{h_m}, 1\leq m\leq M \label{2tst:3} ,
  \end{alignat}
\end{problem}  
where $b_0=\frac{g_0}{\sigma^2}$. 

The constraint function in \eqref{2tst:2} is concave, as illustrated in the following. The left-hand side of \eqref{2tst:2} can be expressed as follows: $f(h(\mathbf{p}), P_0)$, where $\mathbf{p}=\begin{bmatrix}
g_1^2P_1 &\cdots &g_M^2P_M
\end{bmatrix}$, $f(x,P_0) = M \log(1+a_0x)+\log(1+b_0P_0)$, and $h(\mathbf{x}) =\left(\sum^{M}_{k=1}x_k^{\frac{1}{2}}\right)^2 $. Note that $h(\mathbf{x})$ is simply the $\ell_p$-norm of   $\mathbf{x}$ with $p=\frac{1}{2}$. Recall that for   $\mathbf{x}\geq 0$, its $\ell_p$-norm is a concave function, if $p<1$  \cite{Boyd}. Therefore, $f(h(\mathbf{p}), P_0)$ is the composition of two concave functions, and hence the constraint in \eqref{2tst:2} is  concave. Therefore, problem \ref{pb:2} is a concave optimization problem, since 
both the objective function in \eqref{2tst:1} and the constraint in \eqref{2tst:3} are  affine functions.  
 
{\it Remark 2}: Note that $P_m$ is non-negative, and hence the right-hand side of the constrain in \eqref{2tst:3} needs to be non-negative,  i.e.,
\begin{align}\label{remark1}
 \frac{P}{e^{R_t}-1} \geq \frac{\sigma^2}{h_m}\Longrightarrow   
 \log\left(1+\frac{Ph_m}{\sigma^2} \right)\geq R_t.
\end{align}
 which  means that  not all the beams, $\mathbf{w}_m$, are useful to ${\rm U}_0$. In particular,   beam selection   needs to be carried out to ensure that only the qualified beams, i.e., $  h_m \geq  \sigma^2\frac{e^{R_t}-1}{P} $, are used to serve  ${\rm U}_0$. 

\vspace{-1em}
\subsection{Approach II}
Again by applying the orthogonality between the preconfigured beams,    ${\rm U}_m$'s   data rate expression,  $1\leq m \leq M$, can be simplified as follows:
 \begin{align}
R_{1,m}^{\rm II}\approx &  \log\left(1+a_m  \left|\mathbf{h}_m^H \mathbf{w}_m f_m\sqrt{P_m}\right|^2 \right) \\\nonumber
=& \log\left(1+a_mh_m P_m \right) .
 \end{align} 
 Similar to Approach I,  $R_{1,m}^{\rm II}$ is also not a function of $f_m$,  and hence the choice of $f_m=\frac{\mathbf{h}_0^H \mathbf{w}_m}{|\mathbf{h}_0^H \mathbf{w}_m|}$ is still applicable,  which means that problem \ref{pb:1} can be simplified as follows:

  \begin{problem}\label{pb:3} 
  \begin{alignat}{2}
\underset{P_m\geq 0  }{\rm{min}} &\quad     M \sum^{M}_{m=1} P_m  +P_0  \label{3tst:1}
\\ \label{3tst:2}s.t. &\quad M \min \left\{ \log\left(1+a_0\left|\sum^{M}_{k=1} g_k\sqrt{P_k}\right|^2 \right) \right. \\\nonumber &\quad\quad\quad\quad \left., \log\left(1+a_m h_m P_m \right)  , 1\leq m \leq M\right\}\\  &\quad\quad\quad\quad+ \log\left(1+b_0P_0\right)
 \geq  R_0,  \nonumber
  \end{alignat}
\end{problem}  
 Problem \ref{pb:3} can   be further recast   as follows:
    \begin{problem}\label{pb:4} 
  \begin{alignat}{2}\label{4tst:1}
\underset{P_m \geq0 }{\rm{min}} &\text{ }     M \sum^{M}_{m=1} P_m  +P_0 
\\ s.t. &\text{ }  M  \log\left(1+a_m h_m P_m \right)  + \log\left(1+b_0P_0  \right)\label{4tst:2}
 \geq R_0, \\\nonumber &\quad\quad\quad\quad\quad\quad\quad\quad\quad\quad\quad\quad\quad\quad\quad 1\leq m \leq M,  
 \\  &\label{4tst:3}
 M\log\left(1+a_0\left|\sum^{M}_{k=1} g_k\sqrt{P_k}\right|^2 \right)  + \log\left(1+b_0P_0  \right)
 \geq R_0.
  \end{alignat}
\end{problem}  
Similar to problem \ref{pb:2}, problem \ref{pb:4} is also a concave optimization problem, and hence can be efficiently solved. 

{\it Remark 3:} Similar to Approach I, beam selection could be also important to Approach II, as explained in the following. By inviting more beams to help ${\rm U}_0$, ${\rm U}_0$'s effective channel gain, $\left|\sum^{M}_{k=1} g_k\sqrt{P_k}\right|^2$ in \eqref{4tst:3}, is improved; however, the number of constraints in \eqref{4tst:2} is also increased, which can reduce the search space for the power allocation coefficients and hence cause performance degradation. One effective way for beam selection  is to first order the beams according to $h_m$, and select the $M_x$ best beams, $M_x\leq M$.

{\it Remark 4:} We note that the use of  off-shelf optimization solvers can yield the optimal solution of problem  \eqref{pb:4}, but without insightful understandings to the properties of the solutions, which motivates the following special-case studies.

\begin{lemma}\label{lemma1}
Assuming that the preconfigured beams are strictly orthogonal,    the high-SNR approximations of the optimal choices of   $P_0$ and $P_m$, $1\leq m\leq M_x$,  are given by
\begin{align}
P_1^*\approx \cdots \approx P_{M_x}^*\approx \frac{\lambda}{M_x} -\frac{1}{c}, \quad P_0^* \approx \lambda  -\frac{1}{b_0},
\end{align}
for $\frac{P}{\sigma^2}\rightarrow \infty$, if $R_0\geq \frac{M_xb_0}{c}$, otherwise $P_1^*=\cdots =P_{M_x}^*= 0$, and $ P_0^* =\frac{e^{R_0}-1}{b_0}$, 
where the   Lagrange multiplier is given by $\lambda^* 
 =\left(\frac{ M_x^{M}e^{R_0}}{c^{M}b_0}\right)^{\frac{1}{M+1}}$, and $c=\min\left\{\frac{1}{P}, a_0  \left|  \sum^{M_x}_{k=1} g_k\right|^2\right\}$. 
\end{lemma}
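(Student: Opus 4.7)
My plan is to exploit the convexity of problem \ref{pb:4} and reduce it, via symmetry and the high-SNR limit, to a two-variable problem that admits a closed-form KKT solution.

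First, I would use the orthogonality assumption to write $a_m=\frac{1}{Ph_m+\sigma^2}$, so that $a_m h_m\rightarrow \frac{1}{P}$ uniformly in $m$ as $\frac{P}{\sigma^2}\rightarrow\infty$. Each constraint in \eqref{4tst:2} then takes the common form $M\log(1+P_m/P)+\log(1+b_0P_0)\geq R_0$, with an identical per-beam coefficient. Combining this symmetry with convexity of the feasible set, I would adopt the symmetric ansatz $P_1^*=\cdots=P_{M_x}^*\triangleq p$ (and $P_m^*=0$ for the unselected beams, per Remark 3), so that \eqref{4tst:3} collapses to $M\log\left(1+a_0 p\left|\sum_{k=1}^{M_x}g_k\right|^2\right)+\log(1+b_0P_0)\geq R_0$.

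Next, I would merge \eqref{4tst:2} and \eqref{4tst:3} under the ansatz into a single binding constraint $M\log(1+cp)+\log(1+b_0P_0)\geq R_0$, where $c=\min\{1/P,\,a_0|\sum_{k=1}^{M_x}g_k|^2\}$ captures whichever side is tighter. Introducing one Lagrange multiplier $\lambda$ for this effective constraint, stationarity in $p$ yields $1+cp=\lambda c/M_x$, hence $p=\lambda/M_x-1/c$, while stationarity in $P_0$ yields $1+b_0P_0=\lambda b_0$, hence $P_0=\lambda-1/b_0$. Substituting both into the constraint at equality gives $(\lambda c/M_x)^M(\lambda b_0)=e^{R_0}$, from which the stated closed form for $\lambda^*$ follows.

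Finally, I would treat the boundary regime. The interior formula requires $p\geq 0$, i.e., $\lambda c\geq M_x$; when this fails, complementary slackness forces $P_m^*=0$ for all $m$, leaving $P_0^*=(e^{R_0}-1)/b_0$. The crossover condition then reduces to the stated threshold on $R_0$. The main obstacle I anticipate is rigorously justifying the symmetric ansatz when \eqref{4tst:3} is the tighter constraint, since that constraint is not permutation-symmetric in the $g_k$'s; the reduction implicitly relies on the $M_x$ selected beams being of comparable quality, which is essentially the content of the beam-selection step in Remark 3.
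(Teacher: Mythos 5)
Your proposal follows essentially the same route as the paper: approximate $a_mh_m\approx \frac{1}{P}$ under orthogonality, impose the symmetric ansatz to merge the constraints in \eqref{4tst:2}, fold in \eqref{4tst:3} via $c=\min\left\{\frac{1}{P}, a_0\left|\sum^{M_x}_{k=1}g_k\right|^2\right\}$ to get a single-constraint concave problem, and apply the KKT conditions; in fact you carry out the stationarity and complementary-slackness computations more explicitly than the paper does. The only caveat is that your boundary analysis yields the threshold $e^{R_0}\geq \frac{M_xb_0}{c}$ rather than the $R_0\geq \frac{M_xb_0}{c}$ stated in the lemma, a discrepancy present in the paper's own statement rather than a flaw in your argument.
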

\begin{proof}

With the orthogonality assumption made in the lemma, $a_mh_m $ can be approximated at high SNR as follows: 
\begin{align}\label{amhm}
a_mh_m = \frac{\left|\mathbf{h}_m^H \mathbf{w}_m \right|^2}{P \sum^{M}_{i=1} |\mathbf{h}_m^H\mathbf{w}_i|^2+ \sigma^2}\approx \frac{1}{P},
\end{align}
where the orthogonality assumption ensures that $\mathbf{h}_m^H\mathbf{w}_i=0$ for $m\neq i$.  By using \eqref{amhm}, the multiple constraints shown in \eqref{4tst:2} can be merged together, which means that   $P_1=\cdots =P_{M_x}\triangleq P_t$ at high SNR. Therefore, problem \ref{pb:4} can be   recast as follows:
 \begin{problem}\label{pb:7} 
  \begin{alignat}{2}
\underset{P_m \geq0 }{\rm{min}} &\quad    MM_x   P_t +P_0 
\\ s.t. &\quad M  \log\left(1+ \frac{ P_t}{P} \right)  + \log\left(1+b_0P_0  \right)
 \geq R_0,   
 \\\nonumber &
 M\log\left(1+a_0 P_t \left|  \sum^{M_x}_{k=1} g_k\right|^2 \right)  + \log\left(1+b_0P_0  \right)
 \geq R_0,  
  \end{alignat}
\end{problem}  
which can be further simplified  as follows: 
 \begin{problem}\label{pb:8} 
  \begin{alignat}{2}
\underset{P_m \geq0 }{\rm{min}} &\quad    MM_x   P_t +P_0 
\\ s.t. &\quad  
 M\log\left(1+ c P_t   \right)  + \log\left(1+b_0P_0  \right)
 \geq R_0.
  \end{alignat}
\end{problem}

Problem \ref{pb:8} is a concave optimization problem, and its Lagrange can be expressed as follows:
\begin{align}
&L (P_0,P_t,\lambda)=  MM_x   P_t +P_0 \\\nonumber &+\lambda\left(
R_0- M\log\left(1+ c P_t   \right)  -\log\left(1+b_0P_0  \right)
\right),
\end{align}
where $\lambda$ is a Lagrange multiplier  \cite{Boyd}. 
By applying the Karush–Kuhn–Tucker (KKT)  conditions, the lemma can be approved.\end{proof}
 
 {\it Remark 5:} An immediate  conclusion from Lemma \ref{lemma1} is that ${\rm U}_0$'s transmit powers during the first $M+1$ time slots can be reduced by scheduling more beams, i.e., increasing $M_x$ reduces $P_m$, as illustrated in the following.  According Lemma \ref{lemma1}, the optimal value for $P_m$ can be approximated as follows:
 \begin{align}
 P_m^* \approx &\frac{1}{M_x}\left(\frac{ M_x^{M}e^{R_0}}{c^{M}b_0}\right)^{\frac{1}{M+1}} -\frac{1}{c}  \\\nonumber =&
   \left(\frac{  e^{R_0}}{M_xc^{M}b_0}\right)^{\frac{1}{M+1}} -\frac{1}{c}   \approx \frac{1}{c}\left(   \left(\frac{  e^{R_0}}{M_x b_0}\right)^{\frac{1}{M+1}}-1\right) ,
 \end{align}
 where the approximation is obtained by assuming  large $M$. By increasing $M_x$, $c$ is also increasing, which means that $P_m^*$ is reduced. However, it is important to point out that   the overall energy consumption may not be reduced by increasing $M_x$, as shown in the next section. 
 
 For the special case with $M=M_x=1$, an exact expression of power allocation can be obtained without the high-SNR approximation, which leads to more insightful understandings. 
\begin{lemma}\label{lemma2}
For the special case with $M=M_x=1$, the following conclusions can be made:
\begin{itemize}
\item The condition for hybrid NOMA to outperform pure OMA is $  R_0 \geq   \frac{b_0}{\beta}$;
\item The pure NOMA solution, $P_0=0$, causes more energy consumption than hybrid NOMA. 

\end{itemize}
where $\beta=\min \{a_1 h_1,a_0  g_1^2  \}$. 
\end{lemma}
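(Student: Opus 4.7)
My plan is to reduce Problem~\ref{pb:4} with $M=M_x=1$ to a two-dimensional convex program, solve it in closed form via KKT, and then express the two energy gaps (hybrid vs.\ OMA, hybrid vs.\ NOMA) as manifestly non-negative perfect squares.

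First, I would note that for $M=M_x=1$ the two rate constraints \eqref{4tst:2} and \eqref{4tst:3} reduce to $\log(1+a_1 h_1 P_1)+\log(1+b_0 P_0) \geq R_0$ and $\log(1+a_0 g_1^2 P_1)+\log(1+b_0 P_0) \geq R_0$, respectively. Since they share the same $P_0$-term and differ only in the coefficient on $P_1$, the binding constraint is governed by $\beta=\min\{a_1h_1,a_0g_1^2\}$, so Problem~\ref{pb:4} collapses to
\begin{align}
\min_{P_1,P_0\geq 0}\; P_1+P_0 \quad \text{s.t.}\quad (1+\beta P_1)(1+b_0 P_0)\geq e^{R_0}.
\end{align}

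Next, I would apply the KKT conditions to this strictly convex subproblem. Stationarity of the Lagrangian gives $1+\beta P_1^{*} = \lambda^{*}\beta$ and $1+b_0 P_0^{*} = \lambda^{*} b_0$, while the activated rate constraint yields $\lambda^{*2}\beta b_0 = e^{R_0}$, hence
\begin{align}
P_1^{*} = \sqrt{\tfrac{e^{R_0}}{\beta b_0}}-\tfrac{1}{\beta},\qquad P_0^{*} = \sqrt{\tfrac{e^{R_0}}{\beta b_0}}-\tfrac{1}{b_0}.
\end{align}
The interior solution is preferable to the OMA boundary $P_1^{*}=0$ precisely when $P_1^{*}>0$, which after a short rearrangement yields the crossover condition stated in the lemma's first bullet.

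Finally, I would compute the three candidate energies $E_{\rm OMA}=(e^{R_0}-1)/b_0$, $E_{\rm NOMA}=(e^{R_0}-1)/\beta$ and $E_{\rm hyb}=P_1^{*}+P_0^{*}$, and verify both bullets via the algebraic identities
\begin{align}
E_{\rm OMA}-E_{\rm hyb} &= \left(\sqrt{e^{R_0}/b_0}-1/\sqrt{\beta}\right)^{2}, \\
E_{\rm NOMA}-E_{\rm hyb} &= \left(\sqrt{e^{R_0}/\beta}-1/\sqrt{b_0}\right)^{2},
\end{align}
both manifestly non-negative. The second identity is strict whenever the interior hybrid solution has $P_0^{*}>0$, giving the second bullet. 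The first identity vanishes exactly when $e^{R_0}=b_0/\beta$, marking the crossover to pure OMA. The main obstacle I anticipate is reconciling the natural-log threshold $e^{R_0}\geq b_0/\beta$ produced by the KKT analysis with the compact form $R_0\geq b_0/\beta$ printed in the lemma; I expect this is either a typo for $R_0\geq\log(b_0/\beta)$ or an SNR-domain convention, and the remaining verification (that $P_1^{*}=0$ and $P_0^{*}=0$ correctly recover OMA and NOMA) is elementary.
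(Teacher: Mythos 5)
Your reduction to the single-constraint convex program via $\beta$, the KKT solution \eqref{optimal1}, and the treatment of the first bullet coincide with the paper's own proof; you are also right to flag the threshold in the first bullet, since $P_1^*\geq 0$ is equivalent to $e^{R_0}\geq b_0/\beta$, not $R_0\geq b_0/\beta$ as printed. Your perfect-square identities for the energy gaps are algebraically correct and give a cleaner, unified way to see both bullets than the paper's proof by contradiction.

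However, there is one genuine gap in your argument for the second bullet. Your identity
$E_{\rm NOMA}-E_{\rm hyb}=\bigl(\sqrt{e^{R_0}/\beta}-1/\sqrt{b_0}\bigr)^{2}$
only yields a \emph{non-strict} inequality, and you condition strictness on ``$P_0^*>0$'' without ever establishing it. If $P_0^*\leq 0$, the interior KKT point is infeasible, the hybrid optimum degenerates to the boundary $P_0=0$, and pure NOMA would be exactly optimal rather than strictly worse --- so the lemma's second claim would fail. The paper closes this by proving $\beta<b_0$ unconditionally from the system model: $\beta\leq a_0g_1^2=\frac{|\mathbf{h}_0^H\mathbf{w}_1|^2}{P\sum_i|\mathbf{h}_0^H\mathbf{w}_i|^2+\sigma^2}<\frac{|\mathbf{h}_0^H\mathbf{w}_1|^2}{\sigma^2}\leq\frac{|\mathbf{h}_0^H\mathbf{w}_0|^2}{\sigma^2}=b_0$, which together with $e^{R_0}\geq 1$ forces $e^{R_0}>\beta/b_0$, hence $P_0^*>0$ and your square is strictly positive. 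You need to add this channel-gain comparison (or an equivalent argument that $\beta<b_0$) to make the second bullet unconditional; without it your proof establishes only that hybrid NOMA is no worse than pure NOMA.
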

\begin{proof} 
For the two-user case, problem \ref{pb:4} can be simplified as follows:
  \begin{problem}\label{pb:5} 
  \begin{alignat}{2}
\underset{P_m \geq0 }{\rm{min}} &\quad    P_1  +P_{0} 
\\ s.t. &\quad   \log\left(1+a_1 h_1 P_1 \right)  + \log\left(1+b_0P_0  \right)
 \geq R_0
 \\\nonumber &\quad 
 \log\left(1+a_0  g_1^2  {P_1} \right)  + \log\left(1+b_0P_0  \right)
 \geq R_0.
  \end{alignat}
\end{problem}  
By introducing $\beta$, the two constraints can be combined, and problem \ref{pb:5}   can be further simplified as follows:
  \begin{problem}\label{pb:6} 
  \begin{alignat}{2}
\underset{P_m \geq0 }{\rm{min}} &\quad    P_1  +P_{0} 
\\ s.t. &\quad   \log\left(1+ \beta P_1 \right)  + \log\left(1+b_0P_0  \right)
 \geq R_0. 
  \end{alignat}
\end{problem}  
It is straightforward to verify that problem \ref{pb:6} is a concave optimization problem, and the use of  the  KKT conditions leads to the following closed-form expressions of $P_1$ and $P_0$ \cite{Boyd}:
\begin{align}\label{optimal1}
P_1^*=\sqrt{\frac{e^{R_0}}{\beta b_0}}-\frac{1}{\beta}, \quad 
P_0^*=\sqrt{\frac{e^{R_0}}{\beta b_0}}-\frac{1}{b_0},
\end{align}
if $\sqrt{\frac{e^{R_0}}{\beta b_0}}-\frac{1}{\beta}\geq 0$, otherwise pure OMA is to be adopted. 
Establishing the condition for hybrid NOMA to outperform OMA is equivalent to  identify the condition for $P_1^*> 0$. By using the closed-form expression of $P_1^*$ shown in \eqref{optimal1}, the condition can be obtained, as shown in the lemma. 

The second conclusion in the lemma can be proved  by contradiction.  In particular, the difference between pure NOMA and hybrid NOMA means that, if  pure NOMA is adopted, the following condition must be satisfied: 
\begin{align} 
P_0^*=\sqrt{\frac{e^{R_0}}{\beta b_0}}-\frac{1}{b_0}\leq 0,
\end{align}
which is equivalent to the following:
\begin{align} \label{contra1}
 \frac{\beta}{b_0} \geq e^{R_0} \geq 1 \Longrightarrow \beta\geq b_0 . 
\end{align}
 However,  $\beta$ is strictly larger than $b_0$, since 
\begin{align}
\beta =& \min \left\{\frac{\left|\mathbf{h}_1^H \mathbf{w}_1 \right|^2}{P |\mathbf{h}_1^H\mathbf{w}_1|^2+ \sigma^2} ,\frac{|\mathbf{h}_0^H\mathbf{w}_1|^2}{P   |\mathbf{h}_0^H\mathbf{w}_1|^2+ \sigma^2}    \right\}\\\nonumber \leq& \frac{|\mathbf{h}_0^H\mathbf{w}_1|^2}{P   |\mathbf{h}_0^H\mathbf{w}_1|^2+ \sigma^2}  < \frac{|\mathbf{h}_0^H\mathbf{w}_1|^2}{  \sigma^2}  \leq \frac{|\mathbf{h}_0^H\mathbf{w}_0|^2}{  \sigma^2}  =b_0,
\end{align}
which contradicts to the conclusion in \eqref{contra1}. Therefore, the second part of the lemma is proved.  
\end{proof}
{\it Remark 6:} Lemma \ref{lemma2} indicates that pure NOMA suffers a performance loss to hybrid NOMA, which is consistent to the   conclusion     previously reported  to the uplink scenario \cite{9679390}. In addition, Lemma \ref{lemma2} shows that the use of hybrid NOMA is particularly useful for the case with large $R_0$. 
  
    \begin{figure}[t]\centering \vspace{-0em}
    \epsfig{file=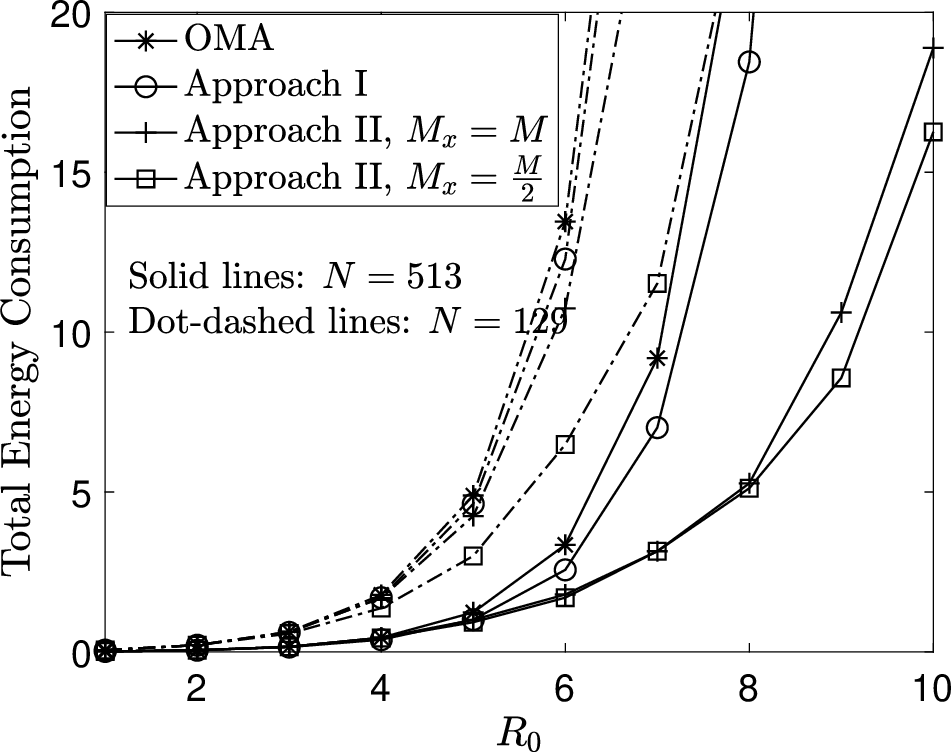, width=0.33\textwidth, clip=}\vspace{-0.5em}
\caption{ Total energy consumption as a function of $R_0$ with randomly located users.  $M=40$, $P=10$ dBm, $\sigma^2=70$ dBm, and $R_t=4$ BPCU.     \vspace{-1em}    }\label{fig1}   \vspace{-1em} 
\end{figure}

\section{Simulation Results}  
In this section, computer simulation results are presented to demonstrate the performance of hybrid NOMA transmission. For all the presented simulation results, $f_c=28$ GHz and the antenna spacing is $d=\frac{\lambda}{2}$.     

In Fig.  \ref{fig1}, the performance of hybrid NOMA is studied by assuming that the users are randomly located.  In particular, the legacy users  are randomly located in a half-ring with its inner radius $5$ m and its outer radius $10$ m. ${\rm U}_0$ is randomly located in a half-ring with its inner radius $150$ m and its outer radius $200$ m. Because the users are randomly located, the users' channel vectors may not be orthogonal, which motivates the use of ZF beamforming. As can be seen from the figure, the use of Approach II can reduce ${\rm U}_0$'s energy consumption significantly, compared to OMA, particularly for large $R_0$, whereas Approach I realizes a performance quite similar to OMA. By increasing the number of antennas of the ULA, the performance gap between hybrid NOMA and OMA can be further enlarged. 

 \begin{figure}[t]\centering \vspace{-0em}
    \epsfig{file=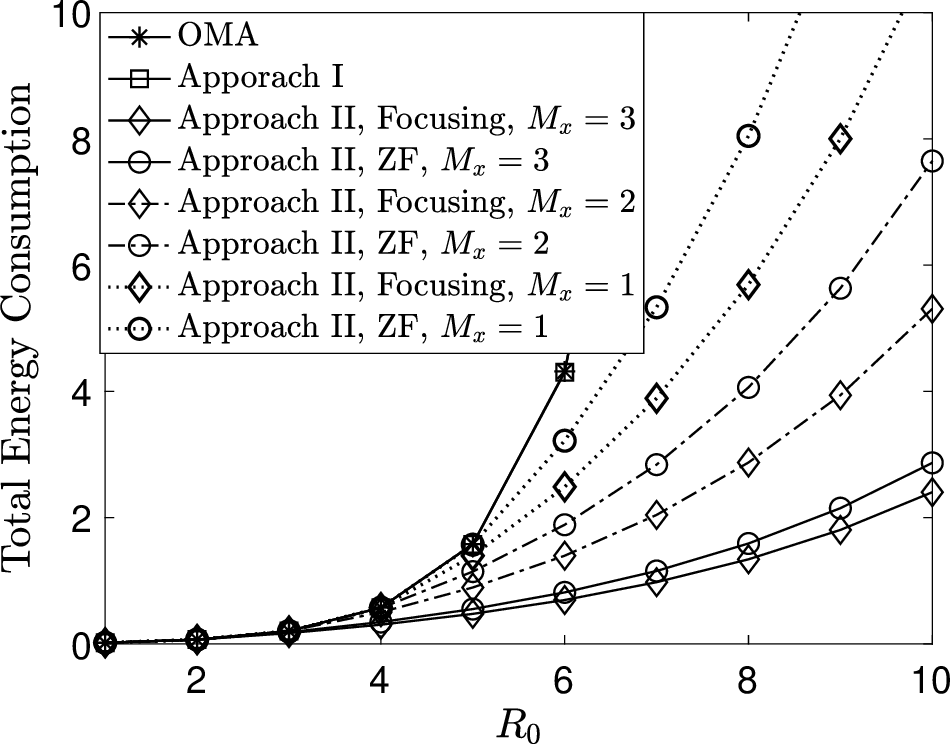, width=0.33\textwidth, clip=}\vspace{-0.5em}
\caption{ The impact of $M_x$ on the energy consumption of hybrid NOMA.   $N=513$, $M=3$, $P=10$ dBm, $\sigma^2=70$ dBm, and $R_t=4$ BPCU.      \vspace{-1em}    }\label{fig2}   \vspace{-1em} 
\end{figure}

In Figs. \ref{fig2} and \ref{fig3}, a deterministic scenario is focused on. In particular,    ${\rm U}_m$, $1\leq m \leq 3$,  are located at $\left(5, \frac{\pi}{4}\right)$, $\left(10, \frac{\pi}{4}\right)$, and $\left(40, \frac{\pi}{4}\right)$, respectively,   ${\rm U}_0$ is located at $\left(200, \frac{\pi}{4}\right)$, where   polar coordinates are used. This scenario represents an important application of near-field communications, such as vehicular networks, where the users are expected to be lined up on a straight street. We note that for this deterministic scenario, it can be easily verified that the channel vectors of ${\rm U}_m$, $1\leq m \leq M$, are almost orthogonal to each other, which means that both beamfocusing and ZF can be used.  Both the figures shows that hybrid NOMA can outperform OMA, particularly with large $R_0$, which confirms Lemma \ref{lemma2}.    In addition, Fig. \ref{fig2} shows that Approach II outperforms Approach I with small $P$ and large $R_t$. However, by increasing $P$ and reducing $R_t$, Approach II can outperform Approach I, as shown in Fig. \ref{fig3}.  Furthermore, Table \ref{table1} also demonstrates the accuracy of the high-SNR approximation    developed in Lemma \ref{lemma1}.

The key feature of the resolution of near-field beamforming causes an interesting observation that increasing $M_x$ introduces  a performance loss   in Fig. \ref{fig1}, but yields a performance gain in Fig. \ref{fig2}, as explained in the following. For the random setup considered in Fig.   \ref{fig2},     the users' channel vectors might be linear dependent, instead of orthogonal, to each other, which means that the use of ZF beamforming can cause the users' channel gains fluctuated.  For example, assume that the users' channel vectors are orthogonal, e.g.,   $\mathbf{H}=\begin{bmatrix}
1 &1\\1&-1
\end{bmatrix}$, which means that  the users' power normalization coefficients are the same, i.e., $\text{diag}\left(\left(\mathbf{H}^H\mathbf{H}\right)^{-1}\right)=\frac{1}{2}\begin{bmatrix}
1  &1
\end{bmatrix}$.  However, if  their channel vectors are not orthogonal, e.g.,  $\mathbf{H}=\begin{bmatrix}
1 &-1\\-1&0
\end{bmatrix}$, the users' power normalization coefficients are no longer the same, i.e.,  $\text{diag}\left(\left(\mathbf{H}^H\mathbf{H}\right)^{-1}\right)= \begin{bmatrix}
1 & 2
\end{bmatrix}$. This fluctuation can make  some users' beams   not as useful as the others', which motivates the use of beam selection, as explained in Remark 3. However, for the case considered in Fig. \ref{fig2}, the resolution of near-field beamforming is perfect, i.e., the users' channel vectors are almost orthogonal to each other, and the fluctuation issue does not exist, which means that  all the beams should be used, i.e., increasing $M_x$ improves the system performance.

   \begin{figure}[t]\centering \vspace{-0em}
    \epsfig{file=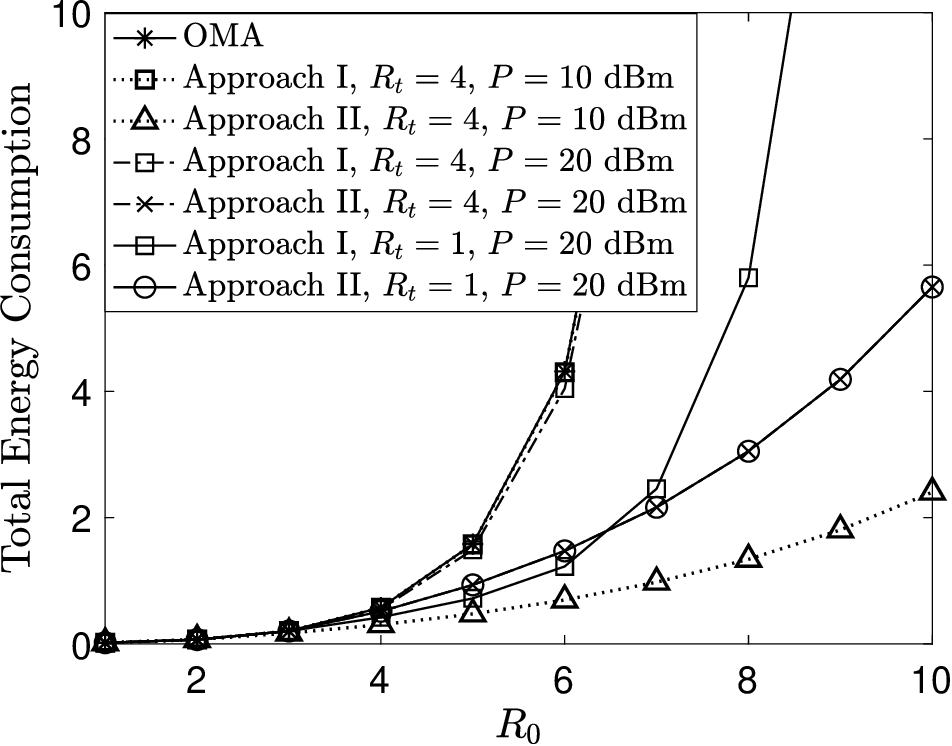, width=0.33\textwidth, clip=}\vspace{-0.5em}
\caption{ Impact of $P$ and $R_t$ on the total energy consumption. $N=513$,  $M=3$, $ M_x=3$ $\sigma^2=70$ dBm.  Beamfocusing is used for beamforming.      \vspace{-1em}    }\label{fig3}   \vspace{-0.01em} 
\end{figure}

\begin{table}\vspace{0em}
  \centering
  \caption{Power Allocation for Hybrid NOMA (in Watts) }\vspace{-1em}
\label{table1}
  \begin{tabular}{|c|c|c|c|c| }
\hline
&  $P_1$ & $P_2$& $P_3$ & $P_4$  \\
    \hline    Optimal ($\sigma^2=-70$ dBm) &0.0684
 	&0.1236 &0.3875	 &0.6647  \\
    \hline Approx.  ($\sigma^2=-70$ dBm)   	
  &0.2118 & 0.2118   & 0.2118  & 0.7338 \\ 
    \hline    Optimal ($\sigma^2=-80$ dBm) &0.0430 
 	&0.0430  &0.0430	 &0.1580  \\
    \hline Approx.  ($\sigma^2=-80$ dBm)   	
 &0.0430 
 	&0.0430  &0.0430	 &0.1579   \\\hline
  \end{tabular}\vspace{-2em}
\end{table}
\vspace{-0.5em}
\section{Conclusions}
 In this letter, a hybrid NOMA transmission strategy was developed to use  the preconfigured near-field beams for serving additional users.  An energy consumption minimization problem was first formulated and then solved by using different successive interference cancellation strategies. Both analytical and simulation results were presented to illustrate the impact of the resolution  of near-field beamforming on the design of hybrid NOMA transmission.
\vspace{-0.5em}

\bibliographystyle{IEEEtran}
\bibliography{IEEEfull,trasfer}
  \end{document}